\renewcommand{\citeauthoryear}[1]{\citeauthor{#1} (\citeyear{#1})}
\newcommand{\walk}{w}		
\newcommand{\walkL}{l}		
\newcommand{\walkS}{s}	
\newcommand{\ppath}{p}		
\newcommand{\ppathL}{l^\prime}		
\newcommand{\ppathS}{s^\prime}	
\newtheorem{theorem}{Theorem}[section]
\newtheorem{definition}{Definition}[section]
\newtheorem{remark}{Remark}[section]
\journal{Journal Name}
\begin{document}

\begin{frontmatter}


\title{Beyond the shortest path: the path length index as a distribution}



\author[Cemaden]{Leonardo B. L. Santos}\ead{santoslbl@gmail.com}

\author[FGV]{Luiz Max Carvalho}

\author[INPE]{Giovanni G. Soares}

\author[MP]{Leonardo N. Ferreira}

\author[HUB]{Igor M. Sokolov}

\address[Cemaden]{National Center for Monitoring and Early Warning of Natural Disasters (Cemaden), Brazil}  

\address[FGV]{School of Applied Mathematics (EMAp), Getulio Vargas Foundation (FGV), Brazil}

\address[INPE]{National Institute of Space Research (INPE), Brazil}  

\address[MP]{Center for Humans and Machines, Max Planck Institute for Human Development, Germany}

\address[HUB]{Humboldt University of Berlin, Germany}

\end{frontmatter}



\section{Motivation}
\label{sec:introduction}

Traversing graphs is a fundamental question in Graph Theory and an important consideration when dealing with complex networks.
The traditional complex network approach considers only the shortest paths from one node to another \cite{barabasi2016network}, and does not take into account several other possible paths.
This limitation is significant, for example, in urban mobility studies \cite{Lima2016, Galbrun2016}, where it important to consider alternative routes between locations.

As mentioned by \citeauthoryear{Lima2016}, in urban mobility settings users choose multiple routes over origin-destination pairs, and those choices often deviate from the shortest time path. \citeauthoryear{Galbrun2016} highlight that chosen routes may be associated with diverse factors, for instance public safety. \citeauthoryear{Tomas2022} further support these claims by showing that exceptional events, such as urban floods, may lead users to deviate from routes previously defined to risk-less (and potentially longer) ones.

\citeauthoryear{Estrada2008} proposed the Communicability Index, a number (scalar) that takes into account not only the shortest paths but also all the walks from one node to another.
Their approach was based on walks.
In contrast, here we are interested in paths due to the urban mobility motivation context. 

On one hand, the number of walks between each pair of nodes in a simple graph is known analytically~\cite{Biggs}.
On the other hand, the analogous problem for paths is NP-hard \cite{Roberts2007}.
\citeauthoryear{Roberts2007} presented an stochastic algorithm to estimate the solution of that problem using a sequential importance sampling.

In this short report, as the first steps, we present an exhaustive approach to address the problem of finding all paths between two nodes.
We show one can go beyond the shortest path but we do not need to go so far: we present an interactive procedure and an early stop possibility.
We apply our ideas to the well-known Zachary's karate club graph~\cite{Zachary1977}.
We do not collapse the distribution of path lengths between a pair of nodes into a scalar number; instead we look at the distribution itself - taking all paths up to a pre-defined path length (considering a truncated distribution), and show the impact of that approach on the most straightforward distance-based graph index: the walk/path length.


\section{Preliminaries: definitions and notation}
\label{sec:notation}

In this section, we give a few definitions and results from elementary graph theory to facilitate the understanding of the new results presented herein. Most of the following discussion is standard and can be found in~\cite{barabasi2016network, Biggs}.

We start by defining a graph (Definition~\ref{def:Graph}) and then a simple graph (Definition~\ref{def:SimpleGraph}), which will be the main objects of interest in this paper. 

\begin{definition}[\textbf{Graph}]
\label{def:Graph}
A graph $G = (V, E)$ is a set of nodes and edges, where $V$ is the set of $|V|$ = $N$ nodes and $E$ is the set of $|E|$ = $M$ edges.
\end{definition}

An edge (also called link or connection) $(i,j), i,j\in V$ connects two nodes $i$ and $j$. A self-connection or a loop is a link $(i,i)$ that connects node $i$ to itself. Multiple edges are two or more edges that connect the same two vertices. 

A link can be undirected or directed.
In an undirected graph, all edges $(i,j)$ connect $i$ to $j$ and vice-versa. A directed graph has directed edges (also called arcs) $(i,j)$, that connect $i$ to $j$, but not $j$ to $i$, i.e., $(i,j) \neq (j,i)$.
A link can also have an associated weight, which is a numeric value.

\begin{definition}[\textbf{Simple Graph}]
\label{def:SimpleGraph}
A graph $G = (V, E)$ is a simple graph if, and only if, it is undirected, there are no self-connections in $G$, no multiple edges or weights.
\end{definition}

A helpful object for characterizing a graph is its adjacency matrix, whose definition is given in Definition~\ref{def:adjacency_matrix}.
\begin{definition}[\textbf{Adjacency matrix}]
\label{def:adjacency_matrix}
The adjacency matrix $\boldsymbol{A}$ of a graph $G = (V, E)$ is the $N \times N$ matrix whose entries $A_{ij}$ are given by
\begin{equation*}
    A_{ij} = \begin{cases}
    1,\: \text{if} \quad i, j \in V \: \text{share an edge};\\
    0,\: \text{otherwise}.
    \end{cases}
\end{equation*}

\end{definition}

In this paper, we are concerned with traversing the graph, i.e., starting from a source node $i \in V$, visiting a collection of nodes, and arrive a target node $j \in V$, where $i = j$ is a possibility.
Here we distinguish between trajectories that allow multiple visits to the same node (and associated) edges, called~\textbf{walks} (Definition~\ref{def:Walk}); and trajectories where each node and vertex can only be visited once, called~\textbf{paths}, presented in Definition~\ref{def:Path}.

We start our discussion with trajectories that can visit the same node multiple times, called~\textbf{walks}:
\begin{definition}[\textbf{Walk}]
\label{def:Walk}
Consider a simple graph $G = (V, E)$ and a pair of nodes $i, j$ in $V$.
A walk $\walk$ in $G$ from $i$ to $j$ is an alternating sequence of edges and nodes from $i$ (node of origin/source) to $j$ (node of destination/target).
\end{definition}
With this definition in hand, we are prepared to state Theorem~\ref{thm:finite_walks}, which tells us that the number of walks of a given (finite) length is finite so long as $|V|$ is finite.
\begin{theorem}[\textbf{Finite number of walks}]
\label{thm:finite_walks}
Consider a simple graph $G = (V, E)$.
Take $i, j \in V$, the number of walks of length $\walkL$ between $i$ and $j$ is given by
\begin{equation*}
    f_{W}^{ij}(\walkL) = \left(A^{\walkL}\right)_{ij},
\end{equation*}
where $A_{ij}$ is the corresponding entry in the adjacency matrix of $G$ -- see Definition~\ref{def:adjacency_matrix}.
\end{theorem}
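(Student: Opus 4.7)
The plan is to proceed by induction on the walk length $\walkL$, exploiting the combinatorial meaning of matrix multiplication. The base case $\walkL = 1$ is essentially a restatement of Definition~\ref{def:adjacency_matrix}: a walk of length $1$ from $i$ to $j$ is simply an edge $(i,j)$, and $A_{ij} \in \{0,1\}$ records exactly whether such an edge exists, so $f_{W}^{ij}(1) = A_{ij}$.

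For the inductive step, I would assume the claim holds for walks of length $\walkL - 1$, i.e.\ $f_{W}^{ij}(\walkL - 1) = (A^{\walkL - 1})_{ij}$ for every pair $i,j$, and then decompose each walk of length $\walkL$ from $i$ to $j$ according to its penultimate vertex. Every such walk is uniquely obtained by concatenating a walk of length $\walkL - 1$ from $i$ to some intermediate node $k$ with an edge $(k,j)$. Summing over all possible choices of $k$ and using the induction hypothesis together with the definition of $A$, I get
\begin{equation*}
f_{W}^{ij}(\walkL) \;=\; \sum_{k \in V} f_{W}^{ik}(\walkL - 1)\, A_{kj} \;=\; \sum_{k \in V} (A^{\walkL - 1})_{ik}\, A_{kj} \;=\; (A^{\walkL})_{ij},
\end{equation*}
which is exactly the product formula for the entries of $A^{\walkL} = A^{\walkL - 1} \cdot A$.

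The main obstacle, such as it is, lies not in the algebra but in the bijective bookkeeping of the inductive step: one must verify that the decomposition ``walk of length $\walkL - 1$ from $i$ to $k$, then edge $(k,j)$'' is a true bijection between walks of length $\walkL$ from $i$ to $j$ and pairs $(\omega, k)$ with $\omega$ a walk of length $\walkL - 1$ ending at a neighbor $k$ of $j$. This uses crucially that walks (Definition~\ref{def:Walk}) allow repeated nodes and edges, so no forbidden-repetition check is needed when appending the final edge — this is precisely where the argument would fail for paths (Definition~\ref{def:Path}) and explains why the analogous counting problem for paths is hard. Finiteness of the count for each fixed $\walkL$ then follows immediately from finiteness of $|V|$, since $(A^{\walkL})_{ij}$ is a finite sum of products of entries of $A$.
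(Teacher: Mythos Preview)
Your inductive argument is correct and is precisely the standard proof of this classical fact. The paper does not actually supply an argument of its own: its proof consists solely of the sentence ``This is a well-known result. See Lemma 2.5 in~\cite{Biggs}.'' Your write-up therefore fills in what the paper leaves to a citation, and the induction-on-$\walkL$ approach you give is the same one found in Biggs, so there is no substantive divergence.
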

\begin{proof}
This is a well-known result.
See Lemma 2.5 in~\cite{Biggs}.
\end{proof}

Now, consider trajectories in a graph without ever visiting any node twice.
Such a trajectory is called a~\textbf{path}:
\begin{definition}[\textbf{Path}]
\label{def:Path}
Consider a simple graph $G = (V, E)$ and a pair of nodes $i, j$ in $|V|$.
A path $\ppath_{ij}$ in $G$ from $i$ to $j$ is an open ($i \ne j$) walk from $i$ to $j$, and with no repeated edges or nodes.
\end{definition}
As Definition~\ref{def:Path} makes clear, paths are specializations (restrictions) of walks.
This might prompt the reader to think that one can study paths by considering restrictions to results about walks.
As we will show later on, this is not always the case.
Our approach is somehow similar to Self Avoiding Walks (SAW) \cite{SAW}, but we fix not only the source but also the target for each path.

In this paper, we will devote attention to connected graphs (Definition~\ref{def:ConnectedGraph}), that is, graphs for which there exists at least one path for every pair of vertices $i,j \in V$.
\begin{definition}[\textbf{Connected Graph}]
\label{def:ConnectedGraph}
A simple graph $G = (V, E)$ is connected if for every pair of vertices $i, j$ one can construct a subset $C_{ij} \subseteq V$, with $|C_{ij}| = K$ where the vertices $c_1, \ldots, c_K \in C_{ij}$ are such that $i$ and $c_1$ share and edge as do $j$ and $c_K$ and also $c_k$ and $c_{k+1}$ share an edge, for $ 2 \leq k \leq K-1$.
In other words, $G$ is connected if and only if one can always construct at least one path between $i, j \in V$, for every such pair. 
\end{definition}


The number of vertices visited in the path $\ppath_{ij}$ is the path length (Definition~\ref{def:PathLength}), and the shortest such path (Definition~\ref{def:SPL}) is usually of great interest as it is related to many optimization problems, such as the traveling salesman problem.
We now state a few more definitions related to traversal of graphs, which will be useful in the remainder of the paper.

\begin{definition}[\textbf{Path Length}]
\label{def:PathLength}

Consider a simple graph $G = (V, E)$.
The number of edges on the path from $i$ to $j$ is the path length ($\ppathL$) of that path.
\end{definition}

\begin{definition}[\textbf{Shortest Path Length}]
\label{def:SPL}
Consider a simple graph $G = (V, E)$.
The number of edges on the shortest path from $i$ to $j$ is the shortest path length ($\ppathS$) of that path. The $\ppathS$ is a number associated with the pair $i$-$j$: for each pair $i$-$j$ there is one and only one $\ppathS$: $\ppathS$(i,j).
\end{definition}

\begin{remark}[\textbf{Shortest Path}]
\label{rmk:ShortestPath}
Consider a simple graph $G = (V, E)$.
For any two vertices $i, j \in V$ there is at least one path from $i$ to $j$ which the path length is the shortest possible.
\end{remark}

\section{Main problem: computing the frequency and length of walks and paths}
\label{sec:main}


First, let us take a look at the number of walks: in a simple graph $G = (V, E)$, for any two vertices $i, j \in V$, there is an infinite number of walks from $i$ to $j$, which holds even for finite graphs.
However, if we take a finite path length, the number of walks with that path length is finite, and it is given by $f(\walkL_{ij}) = (A^n)_{ij}$, with $n = \walkL_{ij}$.
One might now ask what the expected value for $\walkL_{ij}$ is.
As we can always get a walk longer than any other, we cannot define a normalized probability measure; thus, this expectation does not exist.

Let us define the shortest walk length from $i$ to $j$, $\walkS_{ij}$, as the minimum value of $\walkL_{ij}$ - obviously that the walk associated with this length is a path. Then, once any (finite) $\walkL_{ij}$ can be expressed as $\walkL_{ij}$ = $\walkS_{ij}$ + $k$, $k \ge 0$, $k \in N$, and, therefore, a ``truncated'' expected value, under a k-th order approximation, is: 
	\begin{equation}
    \mathbb{E}[\walkL] = \frac
{\sum_{n=\walkS_{ij}}^{n=\walkS_{ij}+k} n (A^n)_{ij}}
{\sum_{n=\walkS_{ij}}^{n=\walkS_{ij}+k} (A^n)_{ij}}.
    \end{equation}

Now, let us move from walks to paths.
Between any pair of nodes $i$-$j$ in $G$, there is at least one path $\ppath$, from $i$ to $j$ - we are considering a single connected component in $G$. 
While \# $\ppath$ is finite, the problem of counting the number of s-t (source-destination) paths in a graph is NP-complete \cite{Roberts2007}.

Here we propose a $k$-th order approximation for the case of paths.
The length of $\ppath_{ij}$ is $\ppathL_{ij}$, and it is between $1$ and $N-1$.
Let us define the shortest path length from $i$ to $j$, s$\ppathL$, as the minimum value of $\ppathL_{ij}$.
Any $\ppathL_{ij}$, therefore, can be expressed as $\ppathL_{ij} = \ppathS_{ij} + k$, for a finite value of $k \le N-2$.
Finally, the expected value, under the k-th order approximation, is:
	\begin{equation}
		\mathbb{E}[\ppathL] = \frac
{\sum_{n=\ppathS_{ij}}^{n=\ppathS_{ij}+k} n f(n)}
{\sum_{n=\ppathS_{ij}}^{n=\ppathS_{ij}+k} f(n)},
	\end{equation}
	 where $f(n)$ is the frequency of a $\ppathL_{ij}$=n.

There is no analytical expression for $f(n)$ in the literature.
Finding all paths in a graph can be very computationally expensive - $O(N!)$ in the worst case: a complete graph with order $N$.
Here we perform a depth-limited search (DLS) in order to find $f(n)$, which can found \href{https://github.com/gioguarnieri/all_paths}{here.}\footnote{\url{https://github.com/gioguarnieri/all_paths}}\footnote{We discussed ``to go beyond the shortest path'' in 2018 and implemented the first complete version of this code in August 2019. The COVID-19 pandemic has changed research agendas worldwide. We resume this paper in 2022.}.

It is worth highlighting we do an exhaustive search - finding all possible paths from a node to another.
However, the main insight is that we do not need to go so far beyond the shortest paths - in order words: we do not use a so much high value of $k$ in the $k$-th order approximation.

\section{Results}
\label{sec:results}

In this section, we present an analytical result considering complete graphs (\ref{sec:complete_graphs}), and, based on a depth-limited Search, results for the Zachary's Karate Club graph (\ref{sec:zachary}).

\subsection{Complete graphs}
\label{sec:complete_graphs}

In a complete graph all nodes are directly connected to all others ($\ppathS_{ij}$=1, $\forall$ i,j).
The number of paths between any pair of nodes is a combinatorial result based on the arrange of N-2 nodes in a path of length $\ppathL$.
\begin{theorem}[\textbf{Number of paths in a complete graph}]
Let $G$ be a complete simple graph.
Then the number of paths of length $k + 1$ is
\begin{equation*}
    f(k + 1) = 
    \begin{cases}
    1, \quad k = 0, \\
    \prod_{r=2}^{r=k+1} N - r, \quad 0 < k < N-1,\\
    0, \quad  k \geq N-1
    \end{cases}
\end{equation*}
\end{theorem}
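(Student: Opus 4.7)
The plan is to count ordered sequences of distinct vertices of the appropriate length, leveraging the fact that in a complete graph every pair of distinct vertices is adjacent, so \emph{any} such sequence automatically constitutes a valid path.

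First I would unpack what a path of length $k+1$ looks like: by Definition~\ref{def:PathLength}, the length counts edges, so a path of length $k+1$ from $i$ to $j$ is a sequence of $k+2$ pairwise distinct vertices $i = v_0, v_1, \ldots, v_k, v_{k+1} = j$ with consecutive pairs sharing an edge. The first and last vertices are fixed as $i$ and $j$, leaving $k$ internal slots to be filled by an ordered choice of $k$ distinct vertices from the remaining $N-2$ candidates in $V \setminus \{i,j\}$. Because the underlying graph is complete, the adjacency requirement is automatically met for any such ordered choice, so counting paths reduces to counting ordered selections.

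Next I would handle the three cases separately. For $k = 0$, there are no internal vertices to choose and the only path is the direct edge $(i,j)$, which exists by completeness, giving $f(1) = 1$. For $0 < k < N-1$, the number of ordered selections of $k$ distinct vertices from a pool of $N-2$ is the falling factorial $(N-2)(N-3)\cdots(N-k-1)$, which I would rewrite as
\begin{equation*}
\prod_{r=2}^{k+1}(N-r),
\end{equation*}
matching the stated expression (note that the product has exactly $k$ factors, consistent with choosing $k$ intermediate vertices). For $k \geq N-1$, a path of length $k+1$ would require at least $k+2 \geq N+1$ distinct vertices, exceeding $|V|=N$, so no such path exists and $f(k+1) = 0$.

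I do not foresee a serious obstacle: the argument is essentially a bijection between paths of the prescribed length and injective functions from $\{1,\ldots,k\}$ into $V\setminus\{i,j\}$, combined with a pigeonhole observation for the upper boundary case. The one point that deserves care is verifying the index range of the product so that the formula returns the correct number of factors (namely $k$) and degenerates correctly at the boundaries $k=1$ and $k=N-2$ (the Hamiltonian path count $(N-2)!$); I would check these explicitly to make sure the stated formula is consistent with the combinatorial counting.
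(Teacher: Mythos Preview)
Your proposal is correct and follows exactly the approach the paper indicates: the paper does not give a formal proof but only the one-line remark that ``the number of paths between any pair of nodes is a combinatorial result based on the arrange of $N-2$ nodes in a path of length $\ppathL$,'' which is precisely the counting-by-ordered-selection argument you spell out. Your treatment is in fact more detailed than what appears in the paper, and your boundary checks at $k=1$ and $k=N-2$ are a sensible addition.
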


It is possible to note that, in a simple but complete graph, between any pair of nodes:

\begin{itemize}
    \item There is only one walk (and path) of length 1;
    \item The number of walks grows exponentially with the number of nodes;
    \item The number of paths grows with the number of nodes, but at a rate inversely related to the number of nodes;
    \item The most frequent path length are the longest ones (lengths N-1 and N-2);
    \item It is always possible to get a walk longer than a previous one;
    \item There is no path of length longer than N-1.
\end{itemize}

Figure \ref{wpnC10} illustrates this result for the a complete graph with $10$ nodes (C-10).

\begin{figure}[!ht]
\centering\includegraphics[width=0.9\linewidth]{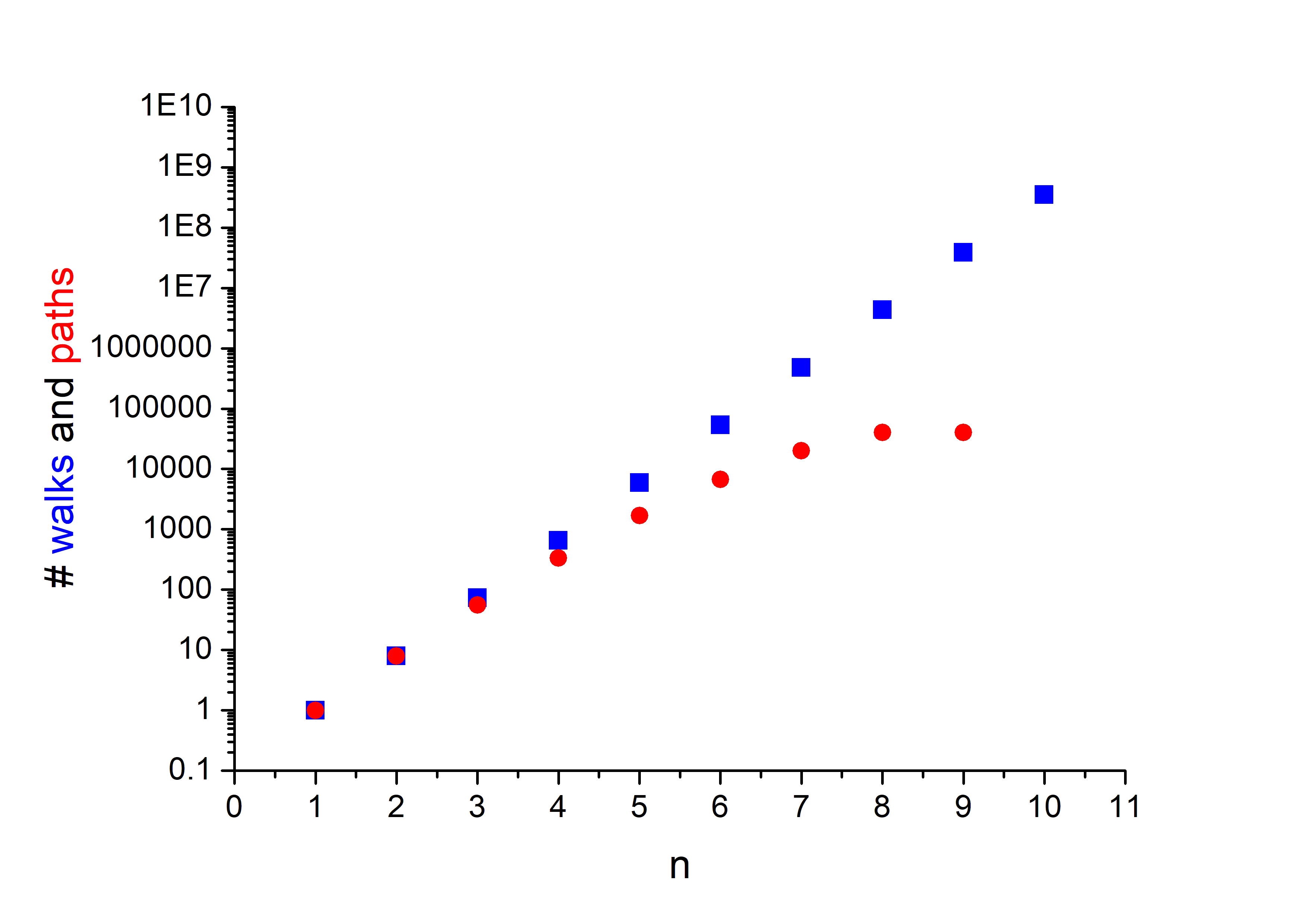}
\caption{Number of walks (blue squares) and paths (red circles) from 0 to 1, in the C10 graph, for each path length.}
\label{wpnC10}
\end{figure}

\subsection{Zachary's Karate Club}
\label{sec:zachary}

Zachary's Karate Club graph is a well known graph \cite{Zachary1977,Girvan2002}, with $N=34$, $M=78$, 1 connected component.
Figure \ref{fig:zachs} shows the Zachary's Karate Club graph, with numerated nodes.

\begin{figure}[H]
    \centering
    \includegraphics[scale=0.7]{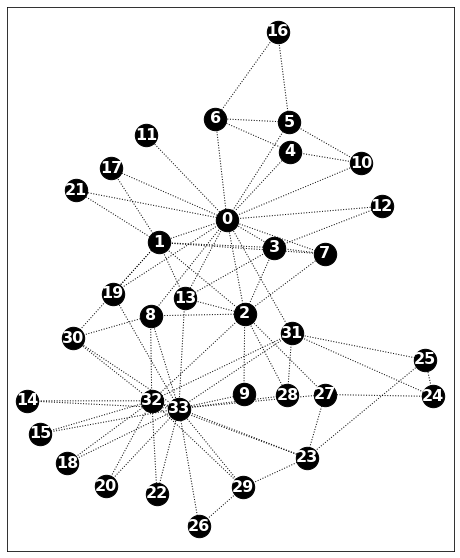}
    \caption{Zachary's Karate Club graph \cite{Zachary1977} has with $N=34$ vertices and $M=78$ edges and a single connected component.}
    \label{fig:zachs}
\end{figure}

Figure \ref{wpn} shows the distribution of the number of walks and paths with specific lengths from node 0 to node 1 in the Zachary's Karate Club graph.
Considering walks and paths between nodes $0$ and $1$, it is possible to note that:

\begin{itemize}
    \item There is an edge connecting nodes $0$ and $1$, so, $A_{01}=1$.
    \item As $(A^{1})_{01} = A_{01} = 1$, there is only $1$ walk from $0$ to $1$ with length $1$.
    \item That walk is a path as well.
    \item The shortest path length between nodes $0$ and $1$ is $1$: $\ppathS_{01} = 1$.
    \item As $(A^{2})_{01} = 7$, there are $7$ walks from $0$ to $1$ with length $2$.
    \item All those walks are paths as well.
    \item As $(A^{3})_{01} = 37$, there are $37$ walks from $0$ to $1$ with length $3$.
    \item However, only $13$ of those $37$ walks are paths as well.
    \item The length of the longest path between nodes 0 and 1 is 18.
\end{itemize}

The number of paths is calculated using the Depth-limited search (DLS), setting the path length as the limit of the DLS. The longer the path length, the more significant the difference between the number of walks and paths between a pair of nodes. There are infinite walks between nodes 0 and 1, but precisely 8.854.467.719.776.520.000 ($\approx$ 8E18) walks with lengths up to 18. The number of paths between nodes 0 and 1 is 80.137 ($\approx$ 8E4).

\begin{figure}[!h]
\centering\includegraphics[width=0.9\linewidth]{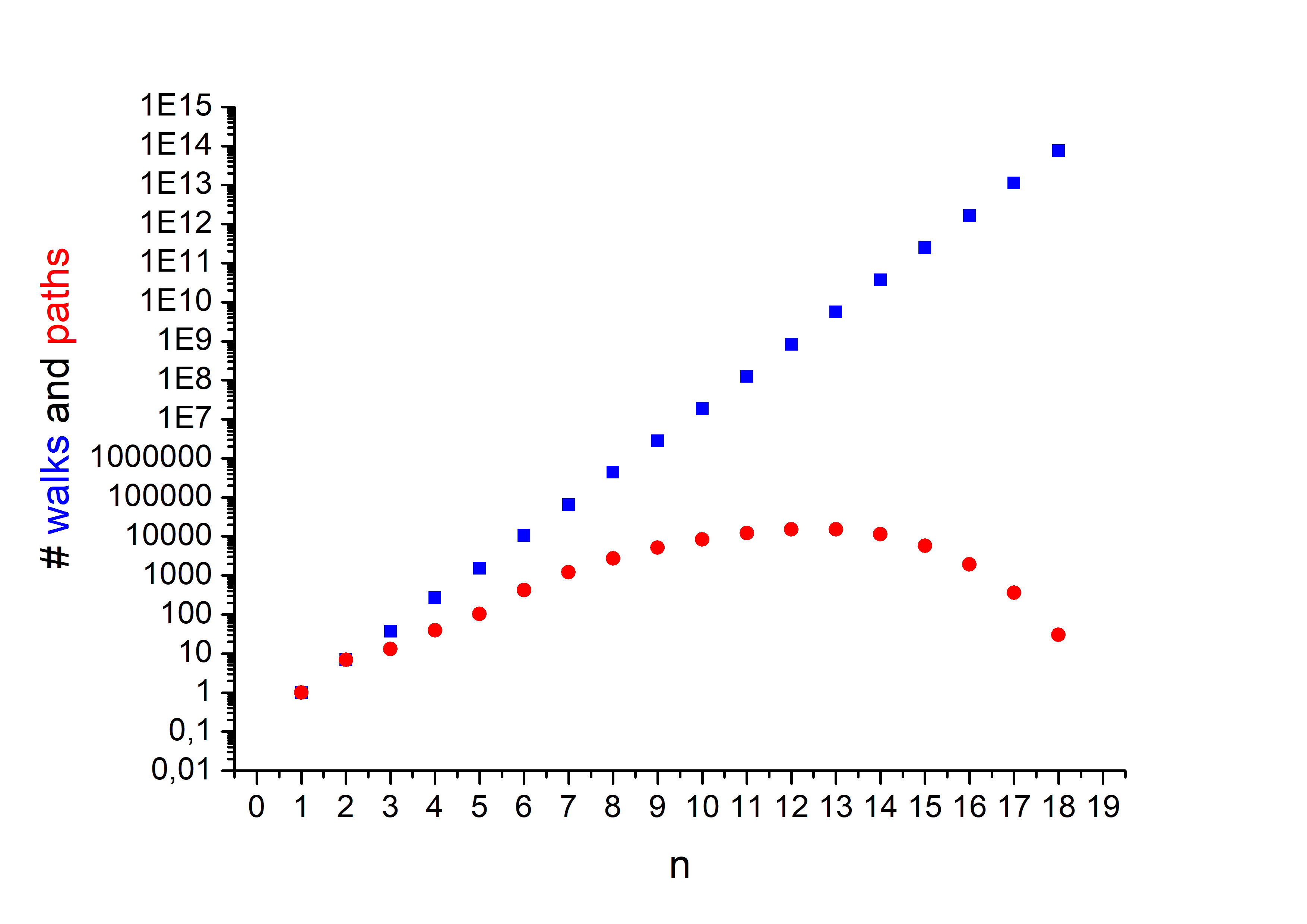}
\caption{Number of walks (blue squares) and paths (red circles) from node 0 to node 1, in the Zachary Karate Club graph, for each path length.}
\label{wpn}
\end{figure}

Going beyond the shortest path, let us calculate the expected value for $\walkL$, under the k-order approximation:
	\begin{equation}
    \mathbb{E}[\walkL_{ij}] =\frac
    {\sum_{n=\walkS}^{\walkS+k} n (A^n)_{ij}}
    {\sum_{n=\walkS}^{\walkS+k} (A^n)_{ij}}.
    \end{equation}

The expected value for $\walk_{01}$, under the k=17-order approximation is, thus:
	\begin{equation}
    \mathbb{E}[\walkL_{01}] =\frac
    {\sum_{n=1}^{18} n (A^n)_{01}}
    {\sum_{n=1}^{18} (A^n)_{01}}.
    \end{equation}

On the other hand, the expected value for $\ppathL$, under the k-order approximation is:
	\begin{equation}
		\mathbb{E}[\ppathL_{ij}] = \frac
    {\sum_{n=1}^{\ppathS_{ij}+k} n f_{P}^{(ij)}(n)}
    {\sum_{n=1}^{\ppathS_{ij}+k} f_{P}^{(ij)}(n)}
	\end{equation}

So, the expected value for $\ppathL_{01}$, under the k=17-order approximation:
	\begin{equation}
		\mathbb{E}[\ppathL_{01}] = \frac
    {\sum_{n=1}^{18} n f_{P}^{(ij)}(n)}
    {\sum_{n=1}^{18} f_{P}^{(ij)}(n)}
	\end{equation}

Figure \ref{delta} shows our ``delta measure'': $\mathbb{E}[\walk_{01}]-s\walk_{01}$ (for walks), $\mathbb{E}[\ppathL_{01}]-s\ppathL_{01}$ (for paths), for each value of $k$.

\begin{figure}[!h]
\centering\includegraphics[width=0.9\linewidth]{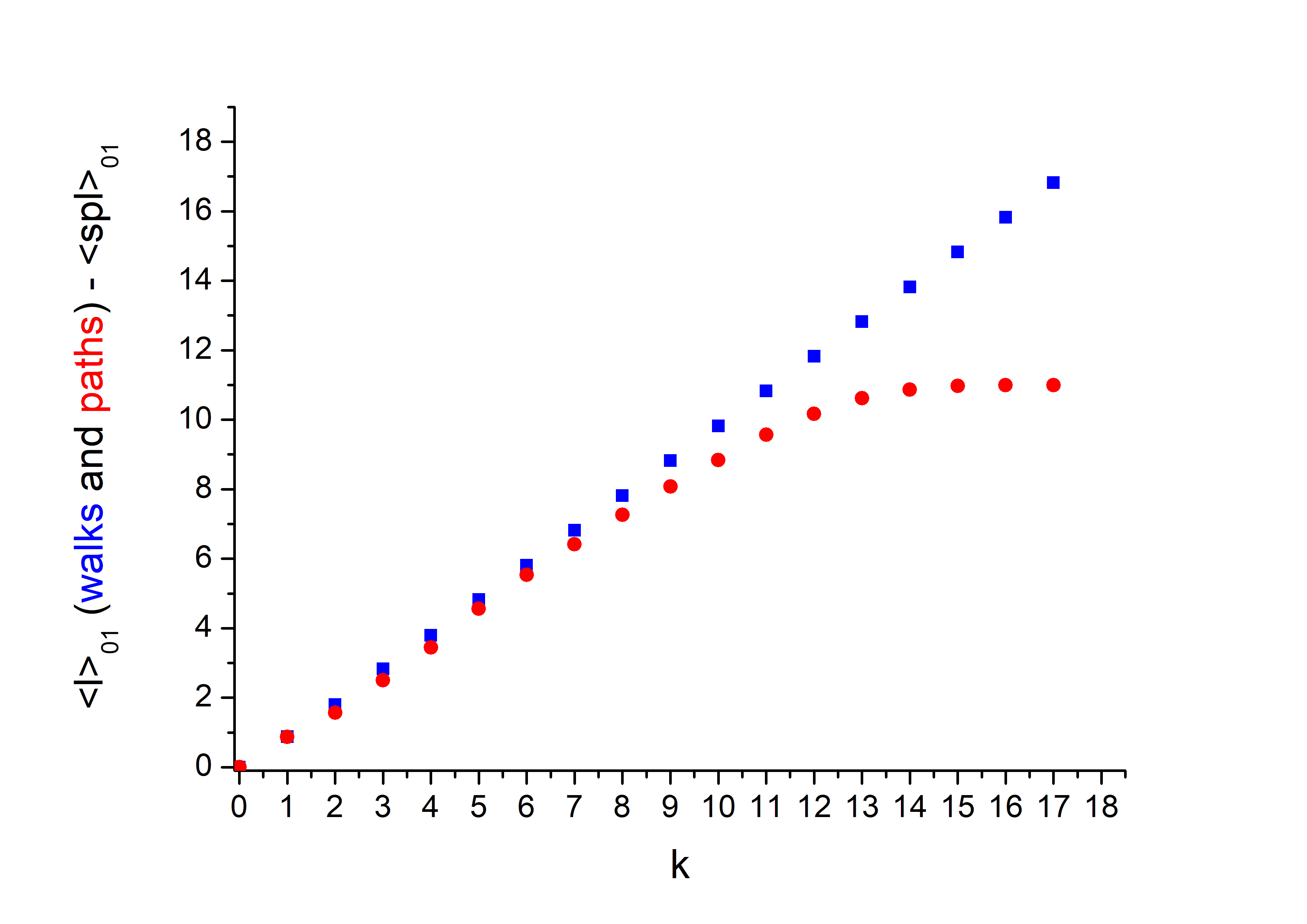}
\caption{$\mathbb{E}[\walk_{01}]-s\walk_{01}$ (blue squares) and $\mathbb{E}[\ppathL_{01}]-s\ppathL_{01}$ (red circles), in the Zachary karate club graph, for different k-order approximations}
\label{delta}
\end{figure}

It is essential to highlight that difference, for the walks-case, grows indefinitely.
However, in the case of the paths, it converges: it happens because when we allow longer paths, although the numerator increases (path length), that increment decreases - once, in a non-complete graph, the number of paths with a length close to the longest possible is smaller than the number of paths with intermediate lengths.

Considering all origins and destinations (all the nodes), we notice that the largest number of paths are between nodes 16 to 25, with 4319868 paths, going from length 4 to 23.
This is interesting since node 16 appears in every case where the shortest path length is equal to the diameter of the network, is on the top of the mean length of paths, and has the highest mode.
Node 16 seems the most unapproachable from the network by looking at these data. To access the table containing the statistics of all paths click \href{https://github.com/gioguarnieri/Pesquisa_Doutorado/blob/master/all_paths_data.md}{here}\footnote{\url{https://github.com/gioguarnieri/Pesquisa_Doutorado/blob/master/all_paths_data.md}}.

\section{Conclusions}
\label{sec:conclusions}

This short report presents an original idea about going ``beyond the shortest path''.
After presenting some fundamental concepts in graph theory, we presented an analytical solution for the problem of counting the number of possible paths between two nodes in complete graphs, and a depth-limited  approach to get all possible paths between each pair of nodes in a general graph.
Using the simple and well-known Zachary's karate club graph, we showed the distribution of walks and path lengths.

The most important result is that we can go beyond the shortest path (facing an NP-hard problem), but (fortunately) we do not need to go so far: there is a convergence/saturation value for the path-length expected value - once, in a non-complete graph, the number of paths with a length close to the longest possible is smaller than the number of paths with intermediate lengths. The value of that control parameter ($k$ - number of edges beyond the shortest path length) can be even smaller when considering penalties for longer paths.

In future work we plan to apply those ideas to a real-world problem, addressing urban mobility-related problems.


\bibliographystyle{elsarticle-num-names}
\bibliography{beyond.bib}


    



\end{document}